\newcommand{\R}{{\mathbb R}}
\newcommand{\Z}{{\mathbb Z}}
\newcommand{\ep}{{\varepsilon}}
\newcommand{\be}{{\beta}}
\newcommand{\de}{{\delta}}
\newcommand{\De}{{\Delta}}
\newcommand{\si}{\sigma}
\newcommand{\vl}{{\; | \;}}
\newcommand{\id}{{\mathrm{id}}}
\newcommand{\Reeb}{\mathrm{Reeb}}
\newcommand{\ER}{\mathrm{EReeb}}
\newcommand{\BR}{\mathrm{BReeb}}
\newcommand{\core}{\mathrm{core}}
\newcommand{\PN}{\mathrm{PN}}
\newtheorem{thm}{Theorem}[section]
\newtheorem{prm}[thm]{Problem}
\newtheorem{lem}[thm]{Lemma}
\newtheorem{pro}[thm]{Proposition}
\newtheorem{cor}[thm]{Corollary}
\theoremstyle{definition}
\newtheorem{dfn}[thm]{Definition}
\begin{document}

\title{Book embeddings of Reeb graphs}

\author{Vitaliy Kurlin \\
 \url{http://kurlin.org} \\
 vitaliy.kurlin@durham.ac.uk \\
 Department of Mathematical Sciences  \\ 
 Durham University,  Durham DH1 3LE, UK
 }

\maketitle 

\begin{abstract}
Let $X$ be a simplicial complex
 with a piecewise linear function $f:X\to\R$.
The \emph{Reeb graph} $\Reeb(f,X)$ is the quotient of $X$, 
 where we collapse each connected component of 
 $f^{-1}(t)$ to a single point.
Let the nodes of $\Reeb(f,X)$ be all homologically critical points where any homology of the corresponding component
 of the level set $f^{-1}(t)$ changes.
Then we can label every arc of $\Reeb(f,X)$ with the Betti numbers $(\beta_1,\beta_2,\dots,\beta_d)$ of the corresponding 
 $d$-dimensional component of a level set.
The homology labels give more information about 
 the original complex $X$ than the classical Reeb graph. 
We describe a canonical embedding of a Reeb graph
 into a multi-page book (a star cross a line) and
 give a unique linear code of this book embedding.
\end{abstract}

\section{Our contributions and related work}
\label{sec:results}

Reeb graphs are used in image analysis \cite{BGSF08} as simple representations of shapes.
The Reeb graph is an abstract graph that describes the linking structure of connected components in level sets of a real-valued function on a complex. 
Briefly, a complex is a union of vertices, edges, triangles, tetrahedra and so on.
Formal Definition~\ref{dfn:Reeb-graph} of a Reeb graph in section~\ref{sec:complexes-Reeb} doesn't specify any canonical way to visualize this abstract graph in a low dimensional space.

\begin{prm}
\label{prm:embedding}
Is it possible to canonically embed (draw without any self-intersections) and encode any general Reeb graph in a low-dimensional space? 
\end{prm}

Here we briefly introduce auxiliary concepts, see all definitions in sections~\ref{sec:complexes-Reeb}-\ref{sec:book-embeddings}. 
A cycle $C$ in a graph is independent of other cycles $C_1,\dots,C_k$ if (briefly) $C\subset\cup_{i=1}^k C_i$ and every edge of $(\cup_{i=1}^k C_i)-C$ is covered twice.
For instance, the theta-graph $\theta$ has two independent cycles (small loops) whose union covers the large cycle once and the middle arc twice. 
In general, $m$ multiple arcs between the same nodes generate $m-1$ independent cycles.
\smallskip

Embedding Theorem~\ref{thm:embedding} requires a Reeb graph simplified by all steps: 
\smallskip

\noindent
(1) forget about every node of degree~2 and then merge its two arcs;
\smallskip

\noindent
(2) replace all multiple arcs between the same nodes by a single arc; 
\smallskip

\noindent
(3) remove any subtree (a subgraph without cycles), but keep its root.
\medskip

\noindent
Let $T_k$ be the star graph with $k$ rays.
Then $T_k\times\R$ is a book with $k$ pages.

\begin{thm}
\label{thm:embedding} 
Let a Reeb graph have at most $m$ independent cycles with a common arc.
If a simplified Reeb graph has $n$ nodes, then the original Reeb graph can be canonically embedded into the book $T_{(m+1)\max\{1,n-2\}}\times\R\subset\R^3$.
\end{thm}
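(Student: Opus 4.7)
The plan is to construct the embedding in three layers that invert the three simplification steps. First I place the $n$ nodes of the simplified Reeb graph on the spine $\{0\}\times\R$ at heights equal to their $f$-values; the resulting order $v_1,\dots,v_n$ is canonical, and each arc of the simplified (simple) graph will be drawn as a monotone curve inside a single page.

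The main combinatorial step is to bound by $\max\{1,n-2\}$ the pagenumber of the simple underlying graph on the ordered vertices $v_1,\dots,v_n$. I intend to use the explicit assignment that sends an arc between $v_i$ and $v_j$ with $i<j$ to page $i$. All arcs on a given page then share the lower endpoint $v_i$ and so are pairwise non-interleaving. Nominally this uses pages $1,\dots,n-1$, but page $n-1$ carries at most the arc $(v_{n-1},v_n)$, whose interval neither interleaves with any arc of the form $(v_1,v_j)$ for $j<n-1$ (the intervals are disjoint) nor with those of the form $(v_1,v_n)$ (they share $v_n$); merging page $n-1$ into page $1$ yields $n-2$ pages for $n\ge 3$, and trivially one page suffices for $n\le 2$.

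Next I invert step~(2) by replacing each of the $\max\{1,n-2\}$ pages by a bundle of $m+1$ consecutive pages, and drawing the up to $m+1$ parallel original arcs of every simplified arc one per page of its bundle. Parallel copies of one arc share both endpoints and so do not interleave, while bundles coming from two simplified arcs originally assigned to the same page share an endpoint $v_i$, so no new crossings appear. This gives a total of $(m+1)\max\{1,n-2\}$ pages. Finally I invert step~(3): each subtree rooted at a node $v$ has pagenumber one, so it can be drawn on a page already present in a bundle incident to $v$, with its internal spine vertices canonically interleaved into a small neighborhood of $v$ by a depth-first order. Since the subtree attaches only at its root, no existing arc crosses it and no extra pages are needed.

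The main obstacle will be ensuring that the construction is genuinely \emph{canonical}: the cyclic order of the $m+1$ parallel arcs within each bundle, the order of several subtrees sharing a common root, and the resolution of any ties in $f$-values must all be forced by $(f,X)$ alone with no auxiliary choice. I expect these to be fixed by data already carried by the Reeb graph — the cyclic order of arcs around each node induced by the orientation of $X$ together with the homology labels on arcs — so that the final embedding depends only on the pair $(f,X)$ and feeds naturally into the linear code promised in the abstract.
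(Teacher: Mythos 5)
Your first layer is fine: assigning the arc $(v_i,v_j)$, $i<j$, of the simplified graph to page $i$ and then absorbing the single arc of page $n-1$ into page $1$ is a correct and even more explicit route to the $\max\{1,n-2\}$-page embedding of the core than the paper's induction (which starts from $K_4\subset T_2\times\R$ and adds one page per new top node). But the rest of the argument has a genuine gap: you treat the passage from the simplified graph back to the original Reeb graph as a \emph{single} round of un-simplification, i.e.\ each simplified arc is replaced by at most $m+1$ parallel arcs and each subtree is re-attached at a node of the simplified graph. In the paper the simplification is iterated in stages (Definition~\ref{dfn:core-subgraph}), and the subgraph $G(e)$ collapsed onto one core arc $e$ (Definition~\ref{dfn:edge-subgraph}) is in general a nested object: subdivided paths whose interior nodes carry attached trees, and families of multiple arcs created at different stages sitting inside one another. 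Your bundle construction does not say where these interior nodes and interior trees go, and, more importantly, it gives no reason why the total number of pages needed over a single core arc is still only $m+1$ rather than growing with the number of stages. That is exactly the content of Lemma~\ref{lem:holed-book}, Case of Step~(2): the $m_1+\dots+m_k$ independent cycles arising inside the various $G(e_i)$ can be rerouted through $e_1$ so that they all share a common subarc and remain independent in $G$, which is what forces $m_1+\dots+m_k\le m$ and keeps the factor at $m+1$. Without an argument of this kind your claim ``up to $m+1$ parallel original arcs of every simplified arc'' only proves the theorem for graphs whose core is reached in one stage.

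A second, smaller omission: you never invert Step~(1), i.e.\ you never place the degree-2 nodes of the original (Euler/Betti) Reeb graph and of the intermediate stages on the spine, although the nodes-in-axis property of Definition~\ref{dfn:book-embedding} requires it; the paper handles this through the access-to-axis property of Lemma~\ref{lem:embed-core} and Case of Step~(1) of Lemma~\ref{lem:holed-book}, which lets an arc be re-routed through free spine intervals $I(e)$. Note also that your merge of page $n-1$ into page $1$ can destroy precisely this kind of spine access for the arc $(v_1,v_n)$ when $(v_1,v_{n-1})$ is also present, so if you add the degree-2 step you will need to either drop that merge or argue around this case. Finally, ``canonicity'' in the paper means only uniqueness up to deformation given the node order, so your worry about cyclic orders induced by an orientation of $X$ is unnecessary, but placing tree nodes ``at heights equal to their $f$-values'' conflicts with squeezing them into a small neighborhood of the root; the paper's Lemma~\ref{lem:embed-tree} resolves this by ordering only relative to arc directions, not actual $f$-values.
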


The embedding in Theorem~\ref{thm:embedding} is canonical in the sense that different embeddings of the same graph can be continuously deformed to each other.
In a book embedding of a Reeb graph, all nodes lie in the binding axis of the book and each arc lies in a single page. 
A book embedding can be linearly encoded, so we may work with easy codes instead of abstract Reeb graphs.
\medskip

Almost all results about book embedding focus on undirected graphs when nodes can lie in any order in the binding axis.
There is a linear time algorithm to embed any planar graph in 4 pages \cite{Yan86}.
A general undirected graph with $n$ nodes can be embedded in $\lceil\frac{n}{2}\rceil$ pages \cite{CLR87}.
We could find results on book embeddings (stack layouts) of general directed acyclic graphs only for a small number of pages \cite{HPT99}.
The full version of Theorem~\ref{thm:embedding} in section~\ref{sec:directed-graphs} extends \cite[Theorem~2.3]{HPT99}, where any graph with only one cycle is embedded into 2 pages.
In section~\ref{sec:discussion} we review other work on computing Reeb graphs.
\medskip

Our methods for book embeddings were inspired by basic embeddings of graphs \cite{Kur00} and by a reduction of the topological classification for graphs embedded in $\R^3$ to word problems in finitely presented semigroups \cite{Kur07}.


\section{Simplicial complexes and their Reeb graphs}
\label{sec:complexes-Reeb}

\begin{dfn}
\label{dfn:complex}
A \emph{simplicial complex} is a finite set $V$ of vertices 
 and a collection of subsets $\si\subset V$ called \emph{simplices} 
 such that all subsets of a simplex are also simplices. 
The dimension of a simplex $\si=\{v_1,\dots,v_k\}$ is $k-1$.
The dimension of a complex is the highest dimension of all its simplices.
Any simplex inherits the Euclidean topology
 from this geometric realization:
$$\De^d=\{(t_0,\dots,t_{d})\in\R^{d+1}\vl 
 t_0+t_1+\dots+t_{d}=1 \mbox{ and all } t_i\geq 0\}.$$ 
Then we can define the topology on any simplicial complex
 by gluing all its simplices along their common subsimplices,
 so we glue $\si,\tau$ along $\si\cap\tau$.
\end{dfn}

\begin{dfn}
\label{dfn:Euler-characteristic}
The \emph{Euler characteristic} of a simplicial complex $X$
 is the alternating sum $\chi(X)=\sum\limits_{i\geq 0}c_i(X)$,
 where $c_i(X)$ is the number of $i$-dimensional simplices of $X$.
The sum is finite as $X$ consists of finitely many simplices.
\end{dfn}

\begin{dfn}
\label{dfn:homotopy}
Two maps $f,g:X\to Y$ are \emph{homotopic} if there is a continuous map $\Phi:X\times[0,1]\to Y$, such that $\Phi(x,0)=f(x)$ and $\Phi(x,1)=g(x)$ for any $x\in X$, so $f,g$ are connected by a continuous family of maps $\Phi(\cdot,t):X\to Y$, $t\in[0,1]$.
Two simplicial complexes $X,Y$ have the same \emph{homotopy type}
 (are \emph{homotopy equivalent}) if there are continuous functions
 $f:X\to Y$ and $g:Y\to X$ such that $g\circ f:X\to X$ and $f\circ g:Y\to Y$ are homotopic to the corresponding identity maps $\id_X:X\to X$ and $\id_Y:Y\to Y$.
\end{dfn}

We remind the classical result that $\chi(X)$ is a homotopy invariant of $X$.

\begin{pro}
\label{pro:Euler-characteristic}
If simplicial complexes $X,Y$ have the same homotopy type, then $\chi(X)=\chi(Y)$, so the Euler characteristic is a homotopy invariant.
\end{pro}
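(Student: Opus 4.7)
The plan is to prove the Euler--Poincar\'e formula
$$\chi(X) \;=\; \sum_{i \geq 0} (-1)^i \, b_i(X),$$
where $b_i(X) = \dim_\Q H_i(X;\Q)$ is the $i$-th rational Betti number, and then invoke the homotopy invariance of homology. Once the right-hand side is expressed in terms of a homotopy invariant, the conclusion $\chi(X) = \chi(Y)$ is immediate.

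First I would set up the simplicial chain complex $(C_\bullet(X),\bd_\bu)$ with $\Q$-coefficients, where $C_i(X)$ is the vector space freely generated by the $i$-dimensional simplices of $X$, so $\dim_\Q C_i(X) = c_i(X)$. Applying rank--nullity to each boundary map $\bd_i : C_i(X) \to C_{i-1}(X)$, and writing $Z_i = \ker \bd_i$, $B_i = \mathrm{im}\, \bd_{i+1}$, $H_i = Z_i/B_i$, I get $c_i = \dim Z_i + \dim B_{i-1}$ and $\dim Z_i = \dim H_i + \dim B_i$. Substituting and taking the alternating sum causes the $\dim B_i$ terms to cancel in a telescoping fashion, leaving
$$\chi(X) \;=\; \sum_{i\geq 0}(-1)^i c_i(X) \;=\; \sum_{i\geq 0}(-1)^i \dim_\Q H_i(X;\Q).$$
Both sums are finite because $X$ has finitely many simplices, so only finitely many $H_i$ are nonzero.

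The second ingredient is the homotopy invariance of singular homology: if $X$ and $Y$ are homotopy equivalent via continuous maps $f:X\to Y$ and $g:Y\to X$ with $g\circ f \simeq \id_X$ and $f\circ g \simeq \id_Y$, then the induced maps $f_*: H_i(X;\Q)\to H_i(Y;\Q)$ and $g_*: H_i(Y;\Q)\to H_i(X;\Q)$ satisfy $g_*\circ f_* = \id$ and $f_*\circ g_* = \id$, so they are mutually inverse isomorphisms. Hence $b_i(X) = b_i(Y)$ for every $i$. Combined with the Euler--Poincar\'e identity of the previous paragraph (applied to both $X$ and $Y$, using that simplicial and singular homology agree for simplicial complexes), this gives $\chi(X) = \chi(Y)$.

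The main obstacle is the homotopy invariance of homology itself, together with the identification of simplicial and singular homology for a simplicial complex. I would cite these as standard facts from algebraic topology rather than reprove them, since the paper only needs the conclusion and the Euler--Poincar\'e derivation above is the genuinely elementary content. An alternative route that sidesteps singular homology is to use simplicial approximation to replace $f$ and $g$ by simplicial maps between suitable subdivisions, and then prove directly that the alternating sum $\sum (-1)^i c_i$ is invariant under barycentric subdivision and under simplicial homotopy; this is more self-contained but longer, and the homological route is cleaner.
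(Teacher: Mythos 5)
Your proof is correct, and there is nothing in the paper to compare it against: the paper states this proposition only as a reminder of a classical fact and gives no proof, so your argument --- the Euler--Poincar\'e identity $\chi(X)=\sum_{i\geq 0}(-1)^i\dim_{\mathbb{Q}}H_i(X;\mathbb{Q})$ via rank--nullity and telescoping, combined with homotopy invariance of singular homology and its agreement with simplicial homology --- is exactly the standard classical route the paper implicitly invokes. One minor remark: the paper's Definition~\ref{dfn:Euler-characteristic} literally writes $\chi(X)=\sum_{i\geq 0}c_i(X)$ while calling it an ``alternating sum''; your proof correctly works with $\sum_{i\geq 0}(-1)^i c_i(X)$, which is clearly what is intended, and it is worth flagging that the invariance claim is false for the unsigned sum as printed.
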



Let $X$ be a simplicial complex
 with a piecewise linear function $f:X\to\R$.
We assume that $f$ is given by its real values at all vertices of $X$.
We consider only generic functions $f$ that have distinct values
 at all vertices. 
Otherwise we symbolically perturb the values of $f$ by replacing the test $f(v_i)\stackrel{?}{<}f(v_j)$ with the test $i\stackrel{?}{<}j$ \cite[p.~253]{PC04}.
Then we linearly extend $f$ to any simplex
  (hence to the whole simplicial complex $X$)
 spanned by vertices $v_0,\dots,v_d$ as follows:
 $f(\sum\limits_{i=0}^d t_i v_i)=\sum\limits_{i=0}^d t_i f(v_i)$,
 where all $t_i\geq 0$ and $\sum\limits_{i=0}^d t_i=1$.

\begin{dfn}
\label{dfn:Reeb-graph}
Let a simplicial complex $X$ have a piecewise-linear function 
 $f:X\to\R$ given by distinct values at all vertices of $X$.
The \emph{Reeb graph} $\Reeb(f,X)$ is the quotient of $X$, 
 where we collapse each connected component of 
 $f^{-1}(t)\subset X$ to a single point.
Different connected components are collapsed to different points.
Any point $p\in\Reeb(f,X)$ represents a connected component
 of $f^{-1}(t)$ and has the associated value $f(f^{-1}(t))=t$.
Then we have the associated function $f_{\Reeb(X)}:\Reeb(f,X)\to\R$.
\smallskip

A value $f(s)\in\R$ is called \emph{regular} if all level sets $f^{-1}(t)$ have the same number of connected components over a small interval $s-\de<t<s+\de$.
Otherwise the value $f(s)$ is \emph{critical}.
A point in $\Reeb(f,X)$ is a \emph{Reeb node} if the corresponding level set component passes through a critical value of $f$.
\end{dfn}

An arc of $\Reeb(f,X)$ is a pair $(u,v)$ of nodes ordered by their values $f(u)<f(v)$.
If $\Reeb(f,X)$ has $k$ multiple arcs between the same nodes $u,v$, then the pair $(u,v)$ is repeated $k$ times in the full set of arcs.
We call vertices and edges of $\Reeb(f,X)$ \emph{nodes} and \emph{arcs}, respectively, to avoid a confusion with vertices and edges of a simplicial complex.
Later we shall embed all arcs of $\Reeb(f,X)$ as curved (often circular) arcs in a multi-page book.
\smallskip

Definition~\ref{dfn:Reeb-graph} implies that $\Reeb_f(X)$ has only Reeb nodes of degree~1 or at least 3, no nodes of degree~2.
However, if we monitor not only connected components of 
 a level set $f^{-1}(t)$, but also the Euler characteristic,
 then we need to add (topologically trivial) nodes of degree~2
 to $\Reeb_f(X)$ when the Euler characteristic or (more generally) 
 any Betti number of  $f^{-1}(t)$ changes.

\begin{dfn}
\label{dfn:Euler-Reeb}
A point $p\in\Reeb_f(X)$ is called an \emph{Euler node} of degree~2, if the corresponding level set component $C(X,p)$ changes its Euler characteristic when passing through the value $f_{\Reeb(X)}(p)$.
Then every arc in the new \emph{Euler-Reeb graph} $\ER(f,X)$ has the associated Euler characteristic.
\end{dfn}

\begin{figure}[h]
\centering
\includegraphics[scale=1.0]{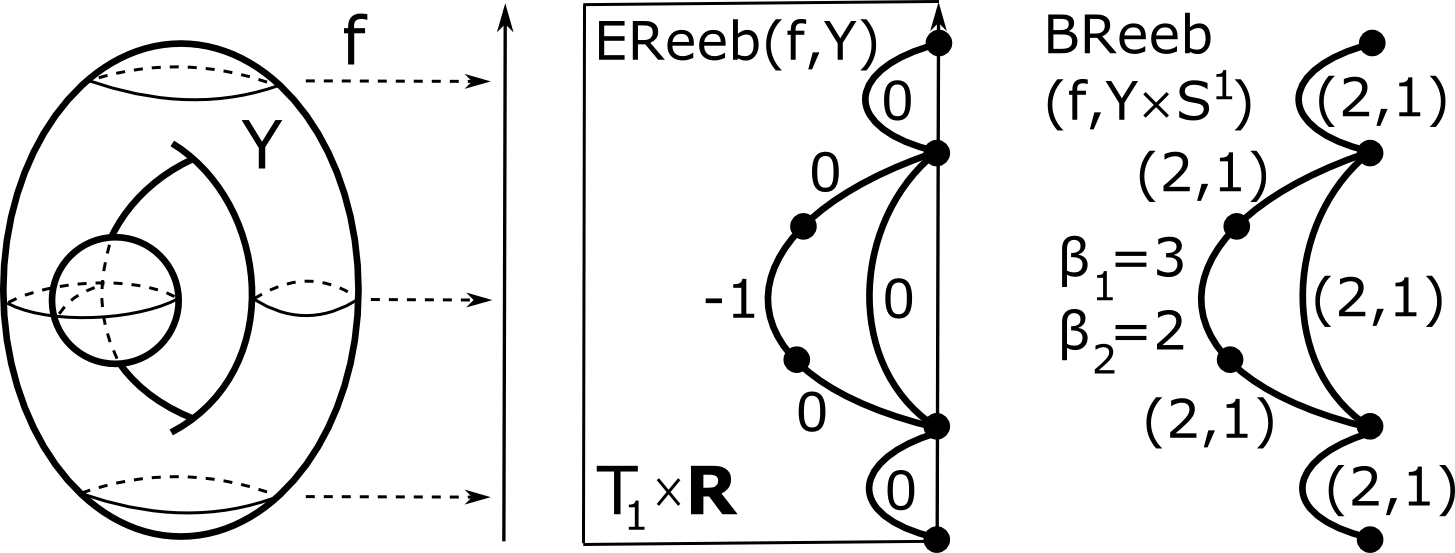}
\caption{
\label{fig:torus-bubble-Reeb}
A torus $Y$ with a bubble, $\ER(f,Y)$ and $\BR(f,Y\times S^1)$.}
\end{figure}

The middle picture in Fig.~\ref{fig:torus-bubble-Reeb} shows $\ER(f,Y)$ for a torus $Y$ with a bubble.
The bubble generates the arc with the associated Euler characteristic $\chi(\theta)=-1$, where the theta-graph $\theta$ is a component in a level set $f^{-1}(t)$.

\begin{dfn}
\label{dfn:Betti-Reeb}
A point $p\in\Reeb_f(X)$ is called a \emph{Betti node} of degree~2, if the corresponding level set component  $C(X,p)$ changes one its Betti numbers $\be_i$, $i\geq 1$, when passing through the value $f_{\Reeb(X)}(p)$.
Each $\be_i$ is the rank of the $i$-th homology group of $C(X,p)$
Every arc in the new Betti-Reeb graph $\BR(f,X)$ has the associated array of Betti numbers $(\be_1,\dots,\be_d)$.
\end{dfn}

The right-hand side picture in Fig.~\ref{fig:torus-bubble-Reeb} shows $\BR(f,Y\times S^1)$.
The bubble in $Y$ gives a component $\theta\times S^1$ (two  tori attached along an annulus) whose Betti numbers are $\be_1=3$ (two meridians, one common longitude) and $\be_2=2$.

\section{Multi-page book embeddings of graphs}
\label{sec:book-embeddings}

\begin{dfn}
\label{dfn:directed-acyclic}
An \emph{undirected cycle} in a graph is a sequence of distinct nodes $p_1,\dots,p_n$ such that any nodes $p_i,p_{i+1}$ are connected by an arc, where $i=2,\dots,n$ and $p_{n+1}=p_1$.
A \emph{directed acyclic graph} (DAG) is a graph with oriented arcs, but without \emph{directed cycles} of consistently oriented arcs. 
\smallskip

Each undirected cycle in $G$ can be considered as a formal sum of its arcs (with coefficients in $\Z/2\Z$ if we forget about orientations).
Several cycles of $G$ are called \emph{independent} if their formal sums are linearly independent in the vector space spanned by all arcs (say with coefficients in $\Z/2\Z$ for simplicity).
\end{dfn}

Embedding Theorem~\ref{thm:embedding} uses the maximum number $m(G)$ of independent undirected cycles that share a common arc.
If cycles of $G$ meet only at nodes or if all cycles are disjoint, they don't share any arcs, so $m(G)=1$.
The graph $\theta_k$ consisting of $k$ arcs $e_1,\dots,e_k$ connecting 2 nodes has $k-1$ independent cycles $e_1\cup e_i$, $i=2,\dots,k$, which share $e_1$, hence $m(\theta_k)=k-1$.

\begin{lem}
\label{lem:Reeb-is-DAG}
Any directed acyclic graph has an ordering of nodes consistent with orientations of arcs, namely any arc is oriented from a smaller node to a larger node.
Any Reeb graph $\Reeb(f,X)$ is a directed acyclic graph.
\end{lem}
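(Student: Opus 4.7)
The plan is to prove the two sentences separately. For the first, I would argue by induction on the number of nodes that any finite DAG admits a linear ordering consistent with all arc orientations. The base case of one node is trivial. For the inductive step, the key lemma is that every finite DAG has a \emph{source}, i.e., a node with no incoming arc. To see this, I would reason by contradiction: if every node had an incoming arc, I could start at any node and step backwards along incoming arcs indefinitely; since $V(G)$ is finite, some node must repeat, producing a directed cycle, contradicting the DAG hypothesis. Given a source $v$, I would delete $v$ (which destroys no directed cycle, so the remaining graph is still a DAG), apply the inductive hypothesis to get an ordering of the smaller DAG, and then prepend $v$ to that ordering. Every arc at $v$ is outgoing, so consistency is preserved.

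For the second sentence, the orientation on $\Reeb(f,X)$ is already declared before the lemma: an arc $(u,v)$ is oriented from $u$ to $v$ whenever $f_{\Reeb(X)}(u) < f_{\Reeb(X)}(v)$. My plan is to rule out directed cycles by monotonicity of $f_{\Reeb(X)}$ along directed paths. Any hypothetical directed cycle $p_1\to p_2\to\dots\to p_n\to p_1$ would force a strictly increasing chain $f_{\Reeb(X)}(p_1)<f_{\Reeb(X)}(p_2)<\dots<f_{\Reeb(X)}(p_n)<f_{\Reeb(X)}(p_1)$, which is impossible. Hence $\Reeb(f,X)$ contains no directed cycle. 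This argument also yields the first statement constructively in the Reeb setting: simply sort the Reeb nodes by their $f_{\Reeb(X)}$-values.

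The one subtlety I would check carefully is that Reeb nodes carry strictly distinct $f_{\Reeb(X)}$-values, so that each arc really has $f_{\Reeb(X)}(u)<f_{\Reeb(X)}(v)$ with strict inequality. This should follow from the generic-value assumption on $f$ together with the symbolic perturbation described right after Definition~\ref{dfn:Reeb-graph}: distinct values at all vertices of $X$ produce distinct critical values in $\R$ and hence distinct $f_{\Reeb(X)}$-values at Reeb nodes, and also at any Euler or Betti nodes of degree~2 introduced in Definitions~\ref{dfn:Euler-Reeb} and~\ref{dfn:Betti-Reeb}. Beyond this bookkeeping point, I do not expect any real obstacle; the lemma is essentially the standard topological-sort fact combined with the observation that $f_{\Reeb(X)}$ serves as a built-in potential function on $\Reeb(f,X)$.
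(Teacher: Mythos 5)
Your proposal is correct, and it is a bit more detailed than the paper's own argument, though it reaches the same standard conclusion (topological sortability plus the observation that $f_{\Reeb(X)}$ orients the Reeb graph). For the first sentence the paper takes a different, one-line route: it defines a partial order by reachability (declare $u$ smaller than $v$ whenever a consistently oriented path runs from $u$ to $v$, acyclicity giving antisymmetry) and then extends this partial order arbitrarily to a total order; you instead run the classical source-extraction induction, proving along the way that every finite DAG has a node with no incoming arc. Both are standard proofs of the same fact; your induction is more self-contained and makes the finiteness argument explicit, while the paper's reachability order is shorter but silently relies on the (easy but unstated) fact that a partial order extends to a linear order. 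For the second sentence the paper merely states the orientation rule $f_{\Reeb(X)}(u)<f_{\Reeb(X)}(v)$ and leaves acyclicity implicit; you spell out the monotonicity argument ruling out directed cycles and note that sorting nodes by $f_{\Reeb(X)}$-values already produces the required ordering in the Reeb case, which is exactly what the paper's proof tacitly uses. Your side remark about distinct values is fine but not strictly needed: since each arc $(u,v)$ is by definition ordered by the strict inequality $f(u)<f(v)$, a directed cycle would force $f_{\Reeb(X)}(p_1)<\dots<f_{\Reeb(X)}(p_1)$ regardless of whether non-adjacent nodes happen to share a value, and any ties among non-adjacent nodes can be broken arbitrarily in the ordering.
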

\begin{proof}
We can uniquely choose the partial ordering: a node $u$ is smaller than $v$ if there is a path of consistently oriented arcs from $u$ to $v$. Then we may arbitrarily decide which of any two non-comparable nodes is larger.
In $\Reeb(f,X)$ we orient an arc from $u$ to $v$ if $f_{\Reeb(X)}(u)<f_{\Reeb(X)}(v)$.
\end{proof}

\begin{dfn}
\label{dfn:embedding}
A function $f:X\to Y$ between simplicial complexes is called \emph{an embedding} if $f$ is a homeomorphism on image, namely the restriction $f:X\to f(X)$ is bijective and bi-continuous (in both directions).
\end{dfn}

\begin{dfn}
\label{dfn:multi-page-book}
For any integer $k\geq 3$, if we attach $k$ edges (\emph{rays}) to 
 one vertex $c$ (\emph{center}), we get the \emph{star graph} $T_k$. 
The product $T_k\times\R$ is called the \emph{$k$-page book}.
The vertical axis $c\times\R$ is called the \emph{binding axis} (or the \emph{spine}).
For each edge $e\subset T_k$, the product $e\times\R$ is a \emph{page} of $T_k\times\R$, see Fig.~\ref{fig:multi-page-book}.
\end{dfn}

\begin{figure}[h]
\centering
\includegraphics[scale=1.1]{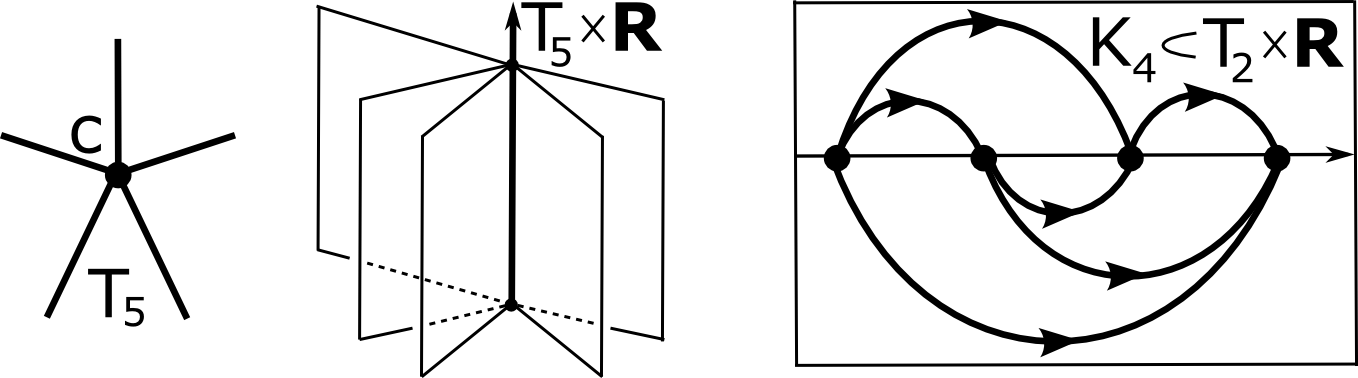}
\caption{
\label{fig:multi-page-book}
The star $T_5$, the book $T_5\times\R$, an embedding $K_4\subset T_2\times\R$.}
\end{figure}

\begin{dfn}
\label{dfn:book-embedding}
An embedding $G\subset T_k\times\R$ of a directed acyclic graph $G$ into the book is called a \emph{book embedding} if the following properties hold:
\smallskip

\noindent
$\bullet$
the \emph{nodes-in-axis} property:
all nodes of $G$ lie in the binding axis $c\times\R$;
\smallskip

\noindent
$\bullet$
the \emph{directional} property:
 any arc has the same direction as the axis $c\times\R$;
\smallskip

\noindent
$\bullet$
the \emph{arc-in-one-page} property:
 any arc is within a single page of $T_k\times\R$.
\end{dfn}

The right hand side picture of Fig.~\ref{fig:multi-page-book} shows a book embedding $K_4\subset T_2\times\R$.
The properties from Definition~\ref{dfn:book-embedding} allow us to encode a book embedding by listing all edges (say, in the increasing order of their lowest node) and by specifying the index of the page containing each arc.
Fig.~\ref{fig:torus-bubble-Reeb} shows more optimal embeddings of $\ER(f,Y)$, $\BR(f,Y\times S^1)$, not book embeddings.

\begin{dfn}
\label{dfn:encoding}
Number nodes of a Reeb graph $G$ by $1,2,\dots,n$.
Then a book embedding $G\subset T_k\times\R$ has the \emph{code} of pairs $(ij)_l$, where $i<j$ are nodes connected by an arc in the $l$-th page.
We may also include labels, e.g. $(ij;\chi)_l$ means that the arc $(ij)$ has the associated Euler characteristic $\chi$.
Pairs of nodes can be lexicographically ordered if we need a unique code.
\end{dfn}

\section{Embeddings of directed acyclic graphs}
\label{sec:directed-graphs}

We were initially motivated by embeddings of Reeb graphs.
However, all results in this section work for more general directed acyclic graphs.

\begin{dfn}
\label{dfn:DAG-tree}
A directed acyclic graph $T$ is called a \emph{tree} if $T$ has no cycles even after forgetting all orientations.
A \emph{root} of $T$ is a node of degree~1.
\end{dfn}

\begin{lem}
\label{lem:embed-tree}
Any directed acyclic tree $T$ with a root $p$ can be embedded into the upper half-disk $\{(x,y)\in\R^2: x^2+y^2\leq\pm\ep x,\; y\geq 0\}$ for any $\ep>0$ in such a way that the root $p$ goes to the origin $(0,0)$, all other nodes lie in the diameter on the $x$-axis and all arcs have the direction of the $x$-axis.
\end{lem}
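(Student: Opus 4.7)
The plan is to proceed by induction on $n=|V(T)|$. The sign $\pm\ep$ in the half-disk is determined by whether the unique arc at $p$ is outgoing from $p$ (take $+\ep$, right half-disk) or incoming (take $-\ep$, left half-disk); by symmetry I treat only the source case, so all arcs in the embedding will run in the positive $x$-direction. The base case $n=1$ simply places $p$ at the origin.

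For $n\ge 2$ the tree has at least two leaves, so I pick a leaf $w\ne p$ with unique neighbor $u$, and set $T'=T\setminus\{w\}$. Then $T'$ is a DAG tree on $n-1$ nodes in which $p$ still has degree $1$, so by the inductive hypothesis $T'$ embeds into a half-disk of any prescribed radius with $p$ at the origin, other nodes on the $x$-axis, and arcs drawn monotonically in the positive $x$-direction without crossings. Let $\pi_u$ denote the $x$-coordinate of $u$ in this embedding. I reinsert $w$ on the $x$-axis at a position $\pi_w$ immediately adjacent to $u$: just to the left of $u$ if $w$ is a source (so that the arc $(w,u)$ points rightward), or just to the right of $u$ if $w$ is a sink; the exact $\pi_w$ is chosen so that no $T'$-node sits strictly between $\pi_u$ and $\pi_w$, first rescaling the embedding of $T'$ by a positive factor to open room if needed. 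The new arc joining $w$ and $u$ is then drawn as a tiny monotone curve in the upper half-plane.

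The main obstacle is verifying that this assembly remains a valid embedding inside the prescribed half-disk. For non-crossing I do a short case analysis relative to the tiny new arc's interval $[\min(\pi_u,\pi_w),\max(\pi_u,\pi_w)]$: any existing $T'$-arc whose $x$-interval misses $\pi_u$ lies entirely on one side of $u$ and is disjoint from the tiny arc (the ``no node in between'' condition rules out a dangling endpoint strictly between $\pi_u$ and $\pi_w$); any existing arc whose interval strictly contains $\pi_u$ must also contain $\pi_w$ by the same condition, so the tiny arc is nested inside it; and arcs of $T'$ sharing the endpoint $u$ with the tiny arc are either nested (sharing $u$ on the side facing $w$) or have disjoint interiors (sharing $u$ on the opposite side). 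Monotonicity in the $x$-direction is built into the construction by the choice of which side of $u$ receives $w$. Finally, containment in the target half-disk of radius $\ep/2$ is secured by invoking the inductive hypothesis with a slightly smaller parameter $\ep'<\ep$, leaving the margin $\ep-\ep'$ to absorb the tiny lateral expansion accommodating $w$.
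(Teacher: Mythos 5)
Your proof is correct, but it runs the induction differently from the paper. The paper's inductive step is top--down: it embeds the single root arc from $p=(0,0)$ to its neighbour $q=(0.5\ep,0)$, deletes that arc so that $T$ splits into subtrees rooted at $q$, and then embeds each subtree (ordered by the order of $q$'s neighbours) into its own, progressively smaller, half-disk nested to the left or right of $q$; non-crossing and containment in the half-disk are automatic from this self-similar nesting, and the prescribed ordering of subtrees is what makes the layout canonical, a picture that is reused later when trees are packed into the areas $A(e)$ in Lemma~\ref{lem:holed-book}. You instead go bottom--up: delete a leaf $w\ne p$, embed $T'=T\setminus\{w\}$ by induction, and reinsert $w$ on the spine immediately next to its neighbour $u$, on the side dictated by the orientation of the arc $wu$. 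Your non-crossing verification is essentially the standard one-page criterion (two spine chords cross only if their endpoints strictly interleave, which your ``no node strictly between $\pi_w$ and $\pi_u$'' condition rules out), which makes the correctness check crisper and more combinatorial, at the cost of a little explicit bookkeeping the paper gets for free: the margin argument (running the induction with $\ep'<\ep$) to stay inside the half-disk, and the observation that the direction constraint is preserved by the choice of side for $w$. Two harmless loose ends: for $n=2$ your claim that $p$ still has degree~1 in $T'$ fails (there $T'$ is a single node, i.e.\ the base case), and no rescaling is ever actually needed since there is always room in an open interval of the spine; neither affects the argument. Note also that your embedding depends on the chosen order of leaf removals, so it is not canonical in the paper's sense, but the lemma as stated does not require that.
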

\begin{proof}
We prove by induction on the number $n$ of nodes. 
The base $n=1$ is trivial.
In the inductive step we may assume that the only arc $e$ at the root $p$ of $T$ is oriented from $p$ to another node $q$.
Then we shall embed $T$ into the positive half-disk with the center $(0.5\ep,0)$ and radius $\ep>0$.
If the arc $e$ is oriented from $q$ to $p$, the symmetric construction gives a required embedding into the negative half-disk with the center $(-0.5\ep,0)$ and radius $\ep$.
\smallskip

We embed the arc $e$ as a curved arc such that the root $p$ goes to the origin $(0,0)$ and the node $q$ goes to the center $(0.5\ep,0)$ of the upper half-disk.
If we remove the arc $e$, the given tree $T$ splits into smaller subtrees with the root $q$.
By the inductive assumption each subtree can be embedded into a small half-disk to the left or to the right of the center $q=(0.5\ep,0)$.
Fig.~\ref{fig:add-trees-to-node} shows yellow half-disks with similarly embedded tripods.
Let the node $q$ have ordered neighbors $q_1<\dots<q_i<q$.
We embed the subtrees of $q_i$ one by one starting from the lowest neighbor $q_1$.
We embed every next subtree into an even smaller half-disk between $q$ and the previous subtree.
We similarly embed subtrees with higher neighbors of $q$ starting from the highest.
\end{proof}

\begin{figure}[h]
\includegraphics[scale=1.1]{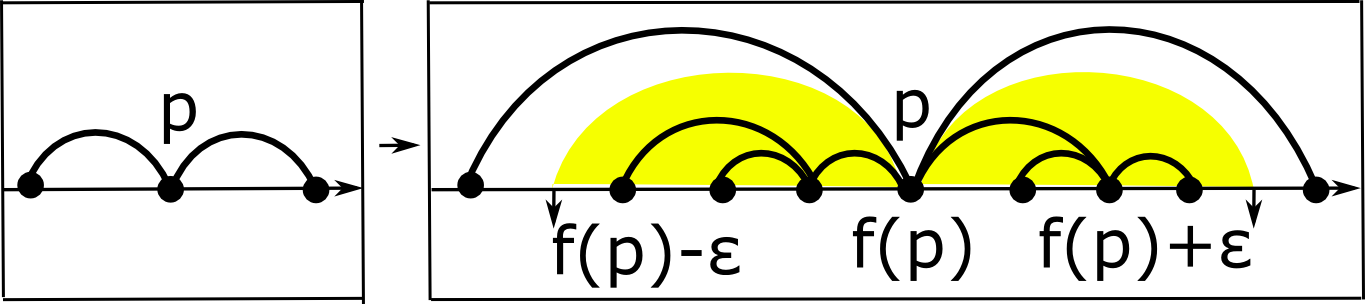}
\caption{
\label{fig:add-trees-to-node}
How to extend a book embedding by adding trees at a node $p$.}
\end{figure}

Any undirected graph $G$ with $n$ nodes has a book embedding into $\lceil\frac{n}{2}\rceil$ pages \cite{CLR87}.
In Lemma~\ref{lem:embed-core} we embed a directed acyclic graph with an extra property that allows us to add later nodes of degree~2 and re-embed any arc in the same page by putting all new nodes into the binding axis $c\times\R$.

\begin{lem}
\label{lem:embed-core}
Any directed acyclic graph $G$ with $n\geq 4$ ordered nodes has a canonical book embedding $G\subset T_{n-2}\times\R$ such that this extra property holds:
\smallskip

\noindent
$\bullet$
the \emph{access-to-axis} property: the projection to $c\times\R$ of any arc contains a small interval not covered by the projections of other arcs in the same page.
\end{lem}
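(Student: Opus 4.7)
The plan is to argue by induction on $n \geq 4$. For the base case $n = 4$, I would give an explicit embedding for each DAG on four ordered nodes $v_1 < v_2 < v_3 < v_4$ into the two-page book $T_2 \times \R$. In the worst case of the complete DAG on four nodes, one may place the three arcs $(v_1,v_2)$, $(v_1,v_3)$, $(v_1,v_4)$ in one page as a nested family sharing the lower endpoint $v_1$, while the remaining arcs $(v_2,v_3)$, $(v_2,v_4)$, $(v_3,v_4)$ go in the other page with $(v_2,v_4)$ drawn as an outer bump and $(v_2,v_3)$, $(v_3,v_4)$ nested side by side inside it. A short geometric check confirms that each arc has a subinterval of its projection where no other arc of the same page lies between it and the binding axis, which is the access-to-axis property.

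For the inductive step, I would assume the statement for DAGs with $n-1$ ordered nodes and let $G$ be a DAG on $v_1 < \dots < v_n$. Let $G'$ be the sub-DAG induced by $v_1,\dots,v_{n-1}$, which by the inductive hypothesis embeds canonically in $T_{n-3}\times\R$ with the access-to-axis property. I would extend this embedding by adding one fresh page to obtain $T_{n-2}\times\R$, placing $v_n$ at the top of the binding axis above all previously placed nodes, and drawing every incoming arc $(v_i,v_n)$ of $v_n$ inside the fresh page.

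Since all of these new arcs share the common top endpoint $v_n$, their axis projections $[h_i,h_n]$ form a chain totally ordered by reverse containment; I would draw them as a nested family of non-crossing curves, with the largest-projection arc farthest from the binding axis and the smallest closest. For each new arc $(v_{i_k},v_n)$, where $i_1 < i_2 < \dots$ index the incoming arcs of $v_n$, the subinterval $[h_{i_k},h_{i_{k+1}})$ of its projection lies outside the projections of the more deeply nested new arcs, so this arc is the innermost curve of its page on that subinterval. This supplies the access-to-axis property for all new arcs; the old arcs of $G'$ retain it because their pages and relative positions are untouched.

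The main obstacle will be to verify that exactly one fresh page suffices at each inductive step and that the resulting embedding is canonical. One page suffices by the geometric observation of the previous paragraph: the incoming arcs at $v_n$ form a nested family and so can be realized in a single page without crossings and with access-to-axis. Canonicity follows because the inductive construction is fully deterministic: the page assignment of every new arc is forced (the fresh page), its nesting depth is determined by the containment of projections, and the embedding of $G'$ is canonical by induction. Hence any two embeddings produced by this algorithm differ only by isotopies within each page of $T_{n-2}\times\R$, which is the intended meaning of a \emph{canonical} book embedding.
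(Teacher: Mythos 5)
Your proof follows essentially the same route as the paper: induction on the number of ordered nodes with the $K_4\subset T_2\times\R$ base case, and an inductive step that places the new highest node on the binding axis and routes all of its (necessarily incoming) arcs as a nested family in one fresh page, leaving the old pages untouched. Your explicit check that each new arc is innermost over the subinterval between consecutive lower endpoints (which is the intended meaning of access-to-axis, as it is used later for $I(e)$ and $A(e)$) is even slightly more detailed than the paper's own argument, which only describes the construction; the proposal is correct.
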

\begin{proof}
We construct a required embedding $G\subset T_{n-2}\times\R$ by induction on the number $n$ of nodes $p_1<\dots<p_n$, which can be ordered by Lemma~\ref{lem:Reeb-is-DAG}.
In the base case $n=4$ we embed $K_4\subset T_2\times\R$
 as shown in Fig.~\ref{fig:multi-page-book}.
In the inductive step from $n$ to $n+1$, we take the $(n+1)$-st highest node $p_{n+1}$ and embed the subgraph of $G$ on $n$ lower nodes by the inductive assumption.
\smallskip

Then we embed all arcs at $p_{n+1}$ into the extra $(n-1)$-st page of the larger book $T_{n-1}\times\R$ as curved arcs connecting $p_{n+1}$ to lower nodes, so we avoid any self-intersections.
The embedding depends only on the order of nodes in the binding axis.
If we deform the nodes keeping their order, then we can continuously deform the embedded graph $G$ within the book $T_{n-2}\times\R$.
\end{proof}

\begin{dfn}
\label{dfn:core-subgraph}
The \emph{core} subgraph $\core(G)$ is obtained from a directed acyclic graph $G$ in several stages where each stage consists of these steps:
\smallskip

\noindent
Step (1): forget about any degree~2 node and merge its two arcs into one; 
\smallskip

\noindent
Step (2): replace all multiple arcs between the same nodes by a single arc; 
\smallskip

\noindent
Step (3): remove any subtree attached to a node $p$ and keep the node $p$. 
\smallskip

At each simplification stage we apply step~(1), then~(2), finally (3) before moving to the next stage when we again apply (1), (2), (3) and so on. 
\smallskip

We have actually introduced the ascending filtration of the subgraphs
$\core(G)\subset\core_1(G)\subset\dots\subset\core_k(G)=G$ when $G$ was simplified to $\core(G)$ through $k$ stages.
Namely, $\core_i(G)$ is obtained from $\core_{i+1}(G)$ in one stage by following Step~(1), then Step~(2), finally Step~(3), $i=1,\dots,k-1$.
\end{dfn}

In fact $\core(G)$ has only nodes of degree at least $3$ and can be considered as a subgraph of $G$.
So the smallest core (except a single point) is $K_4$.
\smallskip

For the Euler-Reeb graph of the torus $Y$ with a bubble in Fig.~\ref{fig:torus-bubble-Reeb} at the first stage we (1) forget about two nodes of degree~2, (2) replace a new double arc by a single arc, (3) simplify the resulting 3-arc path to a single point.
\smallskip

In practice, the Reeb graph $\Reeb(f,X)$ may have few cycles, because any noise in original data usually leads to short edges or trivial nodes of degree~2 that are not included in the core of the graph $\Reeb(f,X)$.
So the core of a Reeb graph is usually small, which leads to embeddings into small books. 

\begin{dfn}
\label{dfn:edge-subgraph}
For any arc $e$ in the subgraph $\core(G)\subset G$ of a directed acyclic graph $G$, we introduce the \emph{edge-subgraph} $G(e)\subset G$ that was collapsed to (replaced by) the arc $e$ in the simplification process from Definition~\ref{dfn:core-subgraph}.
To make $G(e)$ unique, we order all nodes of $G$ by Lemma~\ref{lem:Reeb-is-DAG} and order all arcs with the same endpoints.
Let $T$ be a tree that was attached to a node $p$ and was removed in Step~(3).
We include $T$ into $G(e)$ for the first arc $e$ from the root $p$ of $T$ to the next node $q>p$ in the fixed order of nodes.
\end{dfn}

To clarify Definition~\ref{dfn:edge-subgraph}, we reverse all steps of the last simplification stage from $\core_1(G)$ to $\core(G)$.
If in Step~(3) we removed a subtree $T$ attached to a node $p$, then we include $T$ into the edge-subgraph $G(e)$ for a unique edge at $p$.
If in Step~(2) we replaced $k$ multiple arcs with the same nodes $p,q$ by the single arc $e$, then $G(e)$ contains all these $k$ arcs, so $e$ is blown up $k$ times.
If in Step~(1) we forgot about a degree~2 node in the arc $e$, then we simply add this node to $G(e)$.
After reversing the last simplification stage, we got a larger graph $\core_1(G)\supset\core(G)$.
Now can start building $G(e)$ for any new arc $e\subset\core_1(G)$.
After reversing all stages of Definition~\ref{dfn:edge-subgraph} we have constructed a unique edge-subgraph $G(e)\subset G$ for any arc $e\subset\core(G)$.
\smallskip

A book embedding restricted to a single page is a union of disjoint curved arcs with nodes in the binding axis $c\times\R$. These arcs split the page into intermediate areas, see the yellow area $A(e)$ below the arc $e$ in Fig.~\ref{fig:add-trivial-vertices}.

\begin{figure}[h]
\includegraphics[scale=1.18]{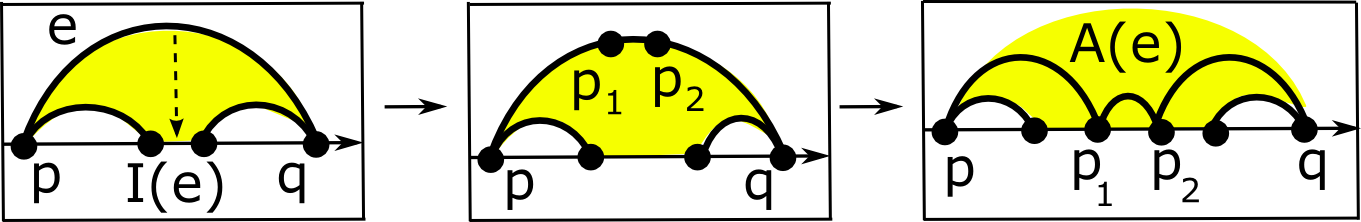}
\caption{
\label{fig:add-trivial-vertices}
How to extend a book embedding after adding nodes of degree~2.}
\end{figure}

\begin{lem}
\label{lem:holed-book}
Let a directed acyclic graph $G$ have at most $m$ independent cycles with a common arc.
Let $\core(G)\subset T_{n-2}\times\R$ be a book embedding from Lemma~\ref{lem:embed-core}.
By the access-to-axis property, for any arc $e\subset\core(G)$, let  $I(e)\subset c\times\R$ be the union of intervals not covered by the projections of other arcs in the same page as $e$. 
Let $A(e)$ be the \emph{open area} bounded by $e$, $I(e)$ and all arcs of $\core(G)$ connecting the endpoints of $e$ and $I(e)$.
Let the \emph{holed} book $B(e)$ consist of $m+1$ copies of $A(e)$ attached along $I(e)$, see the first picture in Fig.~\ref{fig:holed-book-embeddings}.
Then there is a book embedding $G(e)\subset B(e)\subset T_{m+1}\times\R$.
\end{lem}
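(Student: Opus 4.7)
The plan is to build the embedding of $G(e)$ inside $B(e)\subset T_{m+1}\times\R$ inductively, by reversing one simplification stage at a time along the chain $\core(G)=\core_0(G)\subset\core_1(G)\subset\dots\subset\core_k(G)=G$ from Definition~\ref{dfn:core-subgraph}. I start by embedding $e$ as a single curved arc in one of the $m+1$ pages of $B(e)$, with its endpoints on $I(e)$. At each reverse stage I undo Steps~(3), (2), (1) in that order, preserving at every step the nodes-in-axis, directional, and arc-in-one-page conditions of Definition~\ref{dfn:book-embedding}.

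To undo Step~(3), every pruned subtree $T$ attached at a node $p$ is placed by Lemma~\ref{lem:embed-tree} into a small upper half-disk based at $p$; by the access-to-axis property there is room on $I(e)$ near $p$ to fit this half-disk inside one of the available pages. To undo Step~(2), an arc $\tilde e$ in the current embedding that must be replaced by $k$ parallel arcs between the same endpoints is redrawn as $k$ curved arcs distributed among $k$ distinct pages of $B(e)$. To undo Step~(1), each degree~2 node to be reinserted is placed at an interior point of $I(e)$ (again using the access-to-axis property), and the ambient arc splits into two sub-arcs within the same page.

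The main obstacle is controlling the total number of pages ever used at a single pair of endpoints, because successive undo-Step~(2) operations can nest parallel-arc blow-ups on top of one another. The key observation is global: if after all reverse stages some pair of nodes in $G(e)$ supports $K$ mutually parallel arcs, then these arcs form $K-1$ linearly independent undirected cycles all sharing a common arc (any one of them), so $K-1\le m$ and $K\le m+1$ by the hypothesis on $G$. Hence the $m+1$ pages of $B(e)$ always suffice at every location of the embedding. The remaining verifications — that each inserted subtree's half-disk can be shrunk to avoid collisions with earlier arcs, and that reversing Steps~(1) and~(3) never increases the parallel-arc multiplicity — follow routinely from the locality of each modification and from the access-to-axis property, which propagates to each reverse stage by construction.
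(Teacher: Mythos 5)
Your construction follows the same skeleton as the paper (reverse the simplification stages one at a time; use Lemma~\ref{lem:embed-tree} for Step~(3), one page per parallel arc for Step~(2), and points of $I(e)$ for Step~(1)), but the step where you certify the page count has a genuine gap. You bound the pages by the maximal number $K$ of \emph{mutually parallel arcs between a single pair of nodes} in the final graph $G(e)$, arguing $K-1\le m$. That quantity does not control the actual simultaneous page demand. After you reverse a Step~(2) blow-up of $e$ into $e_1,\dots,e_k$, later reversals of Step~(1) insert degree~2 nodes on the $e_i$, subdividing them; subsequent blow-ups then act on sub-arcs, so in the final graph no pair of nodes need carry a large parallel family, yet all of these nested blow-ups project onto overlapping intervals of $I(e)$ near the common endpoint and therefore need \emph{disjoint} pages at the same time. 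The demand accumulates additively over the sibling subgraphs $G(e_1),\dots,G(e_k)$: if $G(e_i)$ needs $m_i$ extra pages, you must fit $m_1+\dots+m_k$ extra pages into $B(e)$, and nothing in your argument bounds this sum by $m$; your phrase that collisions are avoided ``routinely from the locality of each modification'' is exactly where this accounting is hidden.

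The missing idea, which is the real content of the paper's proof, is a rerouting argument showing $m_1+\dots+m_k\le m$: each of the $m_i$ independent cycles of $G(e_i)$ (containing some arc $e'$, with extreme nodes $p',q'$) is replaced by a larger cycle of $G$ that runs through $e=e_1$ from $p$ to $q$, then back along $e_i$ following the original cycle through $q'$ and $p'$ while avoiding $e'$. These rerouted cycles are independent in $G(e)\subset G$ and all share a common subarc of $e_1$, so their total number is at most $m$ by the hypothesis of the lemma, which is what justifies the $m+1$ pages of $B(e)$. Because these cycles are not parallel arcs between a single pair of nodes once degree~2 nodes and trees have been reinserted, your final-graph multiplicity bound cannot see them; without the rerouting step your proof does not establish that $B(e)$ has enough pages.
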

\begin{proof}
Similarly to the discussion after Definition~\ref{dfn:edge-subgraph}, we shall embed $G(e)$ by reversing all simplification stages from Definition~\ref{dfn:core-subgraph}.
Formally we shall argue by induction on the number of stages.
If $\core(G)=G$, then $G(e)=e$ is already embedded.
In the inductive step for the last simplification stage we use embeddings of smaller edge-subgraphs $G(e_i)$ built in fewer stages.
\medskip

\begin{figure}[h]
\includegraphics[scale=1.05]{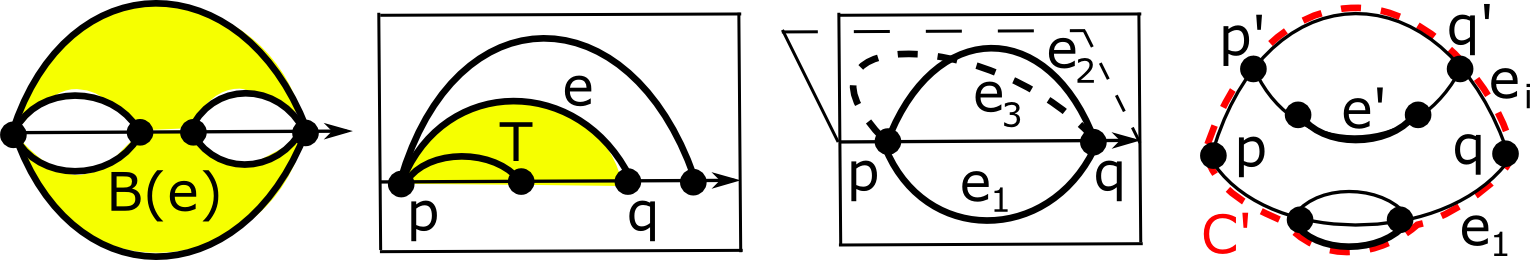}
\caption{
\label{fig:holed-book-embeddings}
A holed 2-page book $B(e)$, cases of Steps (3) and (2) in Lemma~\ref{lem:holed-book}.}
\end{figure}

\noindent
{\bf Case of Step (3)}. 
Let the arc $e\subset\core(G)$ have endpoints $p<q$ and $G(e)$ have a tree $T$ attached at $p$.
By Definition~\ref{dfn:edge-subgraph} the node $q>p$ is the next adjacent node to $p$.
Hence $e$ is the lowest arc (closest to the binding axis $c\times\R$)
 among all arcs at $p$ in the same page.
So the open area $A(e)$ looks like a half-disk bounded by $e$ with the diameter in the binding axis, see the second picture in Fig.~\ref{fig:holed-book-embeddings}.
We embed $T$ into this half-disk by Lemma~\ref{lem:embed-tree}.
\medskip

\noindent
{\bf Case of Step (2)}. 
Let $G(e)$ contain $k$ multiple arcs $e=e_1,\dots,e_k$ that were replaced by the arc $e$ between nodes $p,q$.
These $k$ multiple arcs generate $k-1\leq m$ independent cycles sharing $e=e_1$.
In the simplest case $G(e)=\cup_{i=1}^k e_i$ and we embed each of $k$ arcs $e_i$ one per page (as rotational copies of $e$) into the book $B(e)$ with $m+1\geq k$ pages, see the third picture in Fig.~\ref{fig:holed-book-embeddings}. 
\smallskip

In general, $G(e)$ is the union of all subgraphs $G(e_i)$, $i=1,\dots,k$, built on the arcs $e_1,\dots,e_k$ by reversing earlier simplification stages.
If each arc $e_i$ is shared by at most $m_i$ independent cycles in $G(e_i)$, then we need $m_i$ extra pages to embed $G(e_i)$ by the inductive assumption.
So in total we need $m_1+\dots+m_k$ extra pages for embedding all subgraphs $G(e_i)$.
However, all these $m_1+\dots+m_k$ independent cycles from $G(e)=\cup_{i=1}^k G(e_i)$ give the same number of independent cycles that share the arc $e=e_1$ in $G(e)\subset G$.
\smallskip

Indeed, let each of these $m_i$ cycles contain an arc $e'$, see the last picture in Fig.~\ref{fig:holed-book-embeddings}. 
If $p',q'$ are extreme nodes (according to their order) in such a cycle $C$, we can replace $C$ by the larger cycle $C'$ going through $e=e_1$ (from $p$ to $q$), then along $e_i$ (from $q$ to $p$) following the original cycle $C$ through the nodes $q'$ and $p'$, but avoiding $e'$.
Then all new $m_i$ cycles are independent in $G(e)$ and share the same subarc in $e_1$ as all $m_1$ cycles from $G(e_1)$, $i=2,\dots,k$.
So we have enough $m_1+\dots+m_k\leq m$ pages to embed $G(e)=\cup_{i=1}^k G(e_i)$.
\medskip

\noindent
{\bf Case of Step (1)}. 
Let $G(e)$ have nodes $p_1,\dots,p_j$ of degree~2 on the arc $e$ between nodes $p,q$.
Then we re-embed $e$ in the same page by putting all new nodes into $I(e)\subset c\times\R$.
See Fig.~\ref{fig:add-trivial-vertices} for two nodes $p_1,p_2$.
To make this embedding caninical, we may put all $p_1,\dots,p_j$ into the first interval from (possibly disconnected) $I(e)$ between $p$ and the next node in $c\times\R$.
\end{proof}

\noindent
{\bf General version of Embedding Theorem~\ref{thm:embedding}.}
Let a directed acyclic graph $G$ have at most $m$ independent cycles that share a common arc.
If $\core(G)$ has $n$ nodes, then $G$ has a book embedding in $T_{(m+1)\max\{1,n-2\}}\times\R$.
\begin{proof}
If $\core(G)$ is not a single node, then $n\geq 4$ and we take a book embedding $\core(G)\subset T_{n-2}\times\R$ from Lemma~\ref{lem:embed-core}.
If $\core(G)$ is a single node, then we set $n=3$, so $\max\{1,n-2\}=1$ for simplicity.
Then $G$ is the union of the subgraphs $G(e)$ from Definition~\ref{dfn:edge-subgraph} over all arcs $e\subset\core(G)$.
The open areas $A(e)$ from Lemma~\ref{lem:holed-book} are disjoint over all arcs $e\subset G$ in the same page of $T_{n-2}\times\R$.
Hence the union of the holed books $B(e)$ over all $e\subset\core(G)$ from one page of $T_{n-2}\times\R$ are disjoint in the blown-up $(m+1)$-page book.
For all arcs $e\subset\core(G)$ in one page, we can jointly embed $G(e)$ into the disjoint union of the holed books $B(e)$ by Lemma~\ref{lem:holed-book}, hence into the same $m+1$ pages.
So each page of the book $T_{n-2}\times\R\supset\core(G)$ generates $m+1$ more pages and the whole graph $G$ is embedded into $T_{(m+1)(n-2)}\times\R$.
\end{proof}

If we would like to canonically embed all Reeb graphs into 
 a fixed book, we may choose a star graph with infinitely many edges $T_{\infty}\subset\R^2$ consisting of the radii of the unit circle to the points $\exp\Big(\Big(1-\dfrac{1}{k}\Big)\pi\sqrt{-1}\Big)$, $k\geq 1$.

\section{Discussion and further open problems}
\label{sec:discussion}

Trivial nodes of degree~2 in Reeb graphs $\ER(f,X)$ and $\BR(f,X)$ do not create any problem for encoding book embeddings.
Indeed, we may encode a path between any nodes of degree not equal to 2 by a linear sequence of arcs with all necessary labels.
All extra nodes of degree~2 in Reeb graphs have no attached subtrees.
Then we may leave all multiple arcs between the same nodes in one page.
For instance, the Euler-Reeb graph $\ER(f,Y)$ for a torus $Y$ with a bubble in Fig.~\ref{fig:torus-bubble-Reeb} can be encoded by the array of arcs $(12;0)_1,(23;0)_1,(23;0,-1,0)_1,(34;0)_1$, where $1,2,3,4$ are ordered nodes and the subscript $1$ highlights only 1 page.
One of the arcs between nodes $2$ and $3$ splits into 3 subarcs with the associated Euler characteristics $0,-1,0$.
\smallskip

If we drop the requirement for a book embedding in Definition~\ref{dfn:book-embedding} that all nodes of degree~2 in Reeb graphs lie in the binding axis, then we strengthen Theorem~\ref{thm:embedding} and get optimal embeddings into a 1-page book as in Fig.~\ref{fig:torus-bubble-Reeb}.

\begin{cor}
\label{cor:Reeb-code}
Let a Reeb graph $G$ have at most $m$ independent cycles that share a common arc after the first simplification stage in Definition~\ref{dfn:core-subgraph}.
If $\core(G)$ has $n$ nodes, then $G$ has an embedding into the smaller book $T_{(m+1)\max\{1,n-2\}}\times\R$, which can be linearly encoded as in Definition~\ref{dfn:encoding}.
\end{cor}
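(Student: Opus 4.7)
My plan is to apply the general version of Embedding Theorem~\ref{thm:embedding} not to the original Reeb graph $G$ but to the intermediate graph $G'$ obtained from $G$ by performing only the first simplification stage of Definition~\ref{dfn:core-subgraph}. Since further simplification of $G'$ still produces $\core(G') = \core(G)$ with $n$ nodes, and since by hypothesis $G'$ has at most $m$ independent cycles sharing a common arc, Theorem~\ref{thm:embedding} gives a book embedding $G' \subset T_{(m+1)\max\{1,n-2\}} \times \R$ with every node of $G'$ on the binding axis. Because Step~(2) of the first stage typically destroys cycles (the multiple-arc cycles between pairs of true Reeb nodes), the integer $m$ here is in general smaller than the cycle bound one would feed into Theorem~\ref{thm:embedding} if applied directly to $G$; this is what yields the ``smaller'' book in the statement.

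I would then recover $G$ from $G'$ by reversing the first simplification stage entirely inside the pages already used. For Step~(3) I would embed each subtree of $G$ attached at a node $p \in G'$ into a small half-disk in an adjacent page via Lemma~\ref{lem:embed-tree}, with the root of the tree placed at $p$ on the axis. For Step~(2), for each arc $e \subset G'$ that represented $k$ multiple arcs of $G$ with endpoints $p,q$, I would restore the other $k-1$ copies as mutually disjoint nested monotone curves inside the open area $A(e)$ of Lemma~\ref{lem:holed-book}, all within the single page that contains $e$; this is possible precisely because degree-2 Reeb nodes carry no attached subtrees, so each restored arc is topologically a simple monotone curve from $p$ to $q$. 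For Step~(1) I would sprinkle the Euler or Betti degree-2 nodes along the interiors of these curves inside the page rather than on the binding axis, which is exactly the relaxation of Definition~\ref{dfn:book-embedding} permitted here. All three reversals stay inside the pages already used by $G'$, so the page count does not grow.

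The encoding would follow Definition~\ref{dfn:encoding} applied to the arcs of $G'$, with two natural extensions: repeated node pairs record multiple arcs (one entry per copy), and a list of labels inside a single entry records the Euler or Betti characteristics of the subarcs cut off by the restored degree-2 nodes. The code $(12;0)_1,(23;0)_1,(23;0,-1,0)_1,(34;0)_1$ from Section~\ref{sec:discussion} illustrates the format. The hard part will be the Step~(2) reversal: I need to verify that the $k$ parallel monotone curves between $p$ and $q$ genuinely fit without colliding with any other arc sharing the page, while still respecting the directional property. For this I would rely on the access-to-axis property of Lemma~\ref{lem:embed-core} together with the open area $A(e)$ of Lemma~\ref{lem:holed-book}, which together isolate a disc-like region bounded by $e$ and a free subinterval of the binding axis between $p$ and $q$, large enough to host all nested parallel curves without any interference.
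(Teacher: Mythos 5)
Your proposal is correct and follows essentially the same route as the paper: the paper's justification for Corollary~\ref{cor:Reeb-code} is precisely the preceding discussion in Section~\ref{sec:discussion} --- apply the general Theorem~\ref{thm:embedding} with $m$ counted after the first simplification stage, then restore the stage-one features (subtrees via Lemma~\ref{lem:embed-tree}, multiple arcs kept in one page, degree-2 nodes placed off the binding axis since Reeb degree-2 nodes carry no subtrees) without increasing the page count, and encode with label lists as in Definition~\ref{dfn:encoding}. Your write-up just makes this reversal argument more explicit than the paper does.
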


The number $m$ of independent cycles in Corollary~\ref{cor:Reeb-code} is smaller for a Reeb graph than in general Theorem~\ref{thm:embedding}, because $G$ may lose multiple arcs after Steps (1) and (2) of the first simplification stage of Definition~\ref{dfn:core-subgraph}.
So the number $m$ cycles is now counted for a graph simpler than $G$.
For instance, the Euler-Reeb graph $G=\ER(f,Y)$ for a torus $Y$ with a bubble in Fig.~\ref{fig:torus-bubble-Reeb} initially has $m=1$ (one cycle).
However $G$ is simplified to a single node in one stage, so $m=0$ in Corollary~\ref{cor:Reeb-code} gives an embedding $G\subset T_1\times\R$. 
\smallskip

The problem to find the minimum page number $\PN(G)$ such that a directed acyclic graph $G$ has a book embedding into $T_{\PN(G)}\times\R$ is NP-complete \cite{HP99}.
Our key result is an algorithmic method to embed any directed acyclic graph $G$ into a small book, which gives a practical upper bound for $\PN(G)$. 
The following problem is the next step for automatic processing Reeb graphs.

\begin{prm}
\label{prm:computable-embedding}
Design a fast algorithm to compute a book embedding for a given directed acyclic graph, hence compute a linear code for any Reeb graph.
\end{prm}


Different shapes may have identical Reeb graphs.
The ultimate aim is to design a shape descriptor that uniquely represents geometric objects up to a natural equivalence.
We start from the homotopy equivalence on complexes.

\begin{prm}
\label{prm:reconstruction}
In addition to Betti numbers on edges of a Reeb graph, find extra invariants that would enable us to reconstruct a simplicial complex $X$ up to (say) a homotopy equivalence from the given graph $\Reeb(f,X)$. 
\end{prm}

It is interesting to translate transformations of time-varying Reeb graphs \cite{EHMPS08} into codes of book embeddings.
The $O(m\log m)$ algorithm \cite{DN12,HWW10,Par12} for computing the classical Reeb graph $\Reeb(f,X)$ can probably be adapted for computing the Euler-Reeb graph.
For 3-dimensional scalar fields \cite{PC04} when level sets are 2-dimensional isosurfaces, all Betti numbers from $\BR(f,X)$ were found without increasing the overall computational complexity. 

\begin{prm}
\label{prm:computation}
For a given simplicial complex with a real-valued function,
 compute the Reeb graph with all extra invariants 
 needed for Problem~\ref{prm:reconstruction}.  
\end{prm}

We are open to collaboration on the above (and any related) problems.
We thank in advance any reviewers for critical comments and suggestions.


\end{document}